\documentclass[prx, aps, onecolumn, superscriptaddress, notitlepage]{revtex4-1}
\usepackage{amsthm, color, bbm, graphicx, times, amsfonts, amsmath}
\setcounter{secnumdepth}{3}


\usepackage[unicode=true, bookmarks=true, bookmarksnumbered=false, bookmarksopen=false, breaklinks=false, pdfborder={0 0 1}, backref=false, colorlinks=false]{hyperref}


{\catcode`\|=\active 
	\gdef\Braket#1{\begingroup
		\mathcode`\|32768\let\vert\BraVert\left<{#1}\right>\endgroup}}
\def\BraVert{\egroup\,\mid\,\bgroup}


\def\tr{\mbox{tr}}

\definecolor{Blue}{rgb}{0,0,1}
\definecolor{Red}{rgb}{0,0,0}
\definecolor{Green}{rgb}{0,1,0}
\definecolor{Purp}{rgb}{.2,0,.2}
\definecolor{white}{rgb}{1,1,1}

\makeatletter
\theoremstyle{plain}
\newtheorem{thm}{\protect\theoremname}
\providecommand{\theoremname}{Theorem}

\theoremstyle{plain}

\providecommand{\lemmaname}{Lemma}

\theoremstyle{plain}
\newtheorem{prop}[thm]{\protect\propositionname}
\providecommand{\propositionname}{Proposition}

\theoremstyle{plain}
\newtheorem{defn}[thm]{\protect\definitionname}
\providecommand{\definitionname}{Definition}


\theoremstyle{plain}

\providecommand{\conjecturename}{Conjecture}

\begin{document}
\title{Divisible quantum dynamics satisfies temporal Tsirelson's bound}

\author{Thao Le}
\affiliation{School of Physics \& Astronomy, Monash University, Victoria 3800, Australia }

\author{Felix A. Pollock}
\affiliation{School of Physics \& Astronomy, Monash University, Victoria 3800, Australia }

\author{Tomasz Paterek}
\affiliation{School of Physical and Mathematical Sciences, Nanyang Technological University, Singapore and Centre for Quantum Technologies, National University of Singapore, Singapore}

\author{Mauro Paternostro}
\affiliation{School of Mathematics and Physics, Queen’s University, Belfast BT7 1NN, United Kingdom}

\author{Kavan Modi}
\affiliation{School of Physics \& Astronomy, Monash University, Victoria 3800, Australia }
\email{kavan.modi@monash.edu}


\begin{abstract}
We give strong evidence that divisibility of qubit quantum processes implies temporal Tsirelson's bound. We also give strong evidence that the classical bound of the temporal Bell's inequality holds for dynamics that can be described by entanglement-breaking channels---a more general class of dynamics than that allowed by classical physics.
\end{abstract}

\maketitle

Two classical systems interrogated by space-like separated measurements give rise to correlations bounded by Bell's inequalities~\cite{Bell1994}. Remarkably, quantum systems can violate such bounds, although they cannot achieve the maximal algebraically allowed value~\cite{PRboxes1994}. The quantum maximum, dubbed Tsirelson's bound~\cite{Tsirelson1980}, stems from reasons that are now well understood: violation of this bound would trivialise communication complexity~\cite{vandam,PRL.96.250401} and be against a number of natural postulates~\cite{infocaus, ProcRolSocA.466.881, NJP.14.063024, FoundPhys.43.805, D.14.239, PRL.112.040401, arXiv1507.07514}. In a different yet related context, a number of works have studied correlations between the outcomes of time-like separated observables~\cite{LG1985, Taylor2004, EurophysLett.75.202, PhysRevA.80.034102, Fritz2010, Emary2004LGI, PhysRevA.82.030102, Emary2013, PhysRevLett.111.020403, PhysRevA.89.062319, PhysRevLett.115.120404}. In this scenario, the reasons behind the existence of a Tsirelson-like bound, limiting the value taken by suitably built functions of two-time correlators, are not as clear. In this paper we shed light on this fundamental question, giving strong evidence that Tsirelson's bound for temporal correlations follows from a well-known and prevalent property of dynamical processes, namely their {\it divisibility} (see Ref.~\cite{arXiv:1308.5761} for related first investigations).

Divisibility asserts that dynamical evolution between any two points in time can be decomposed into a series of intermediate-time evolutions. Fundamental dynamics is expected to be divisible and, indeed, the Sch\"odinger equation generates unitary processes, which are fully divisible. Furthermore, divisible evolutions are often good approximations to open-system dynamics and divisibility is assumed explicitly in the derivation of several master equations~\cite{BreuerPetruccione}. In fact, whenever the Markov assumption holds, i.e., the evolving system is memory-less, the process is divisible~\cite{arXiv:1512.00589} and conversely divisible channels always decrease information~\cite{PhysRevA.93.012101}. This provides an intuition as to why divisibility might be the relevant feature for temporal Bell's inequalities. It is known that, in the temporal setting, both classical and quantum bounds on the temporal Bell's inequalities can be violated even with purely classical systems if they embody sufficiently large memory~\cite{PhysRevLett.115.120404, NJP.13.113011}, thus effectively breaking the divisibility condition. An explicit example of this will be given later on in this paper. We also give strong evidence that the usual ``classical'' bound on the temporal Bell's inequality holds for a more general class of dynamics than stochastic maps consistent with classical physics. This parallels the situation for space-like measurements, where the classical bound on Bell's inequality holds for local hidden variable models. These are strictly richer than classical ones, as illustrated for example in Ref.~\cite{PRL.93.230403}, where imposing invariance of measured correlations under rotations of local coordinate systems is shown to lead to a more stringent version of Bell's theorem.

\begin{figure}[!b]
\centering \includegraphics[width=.7\columnwidth]{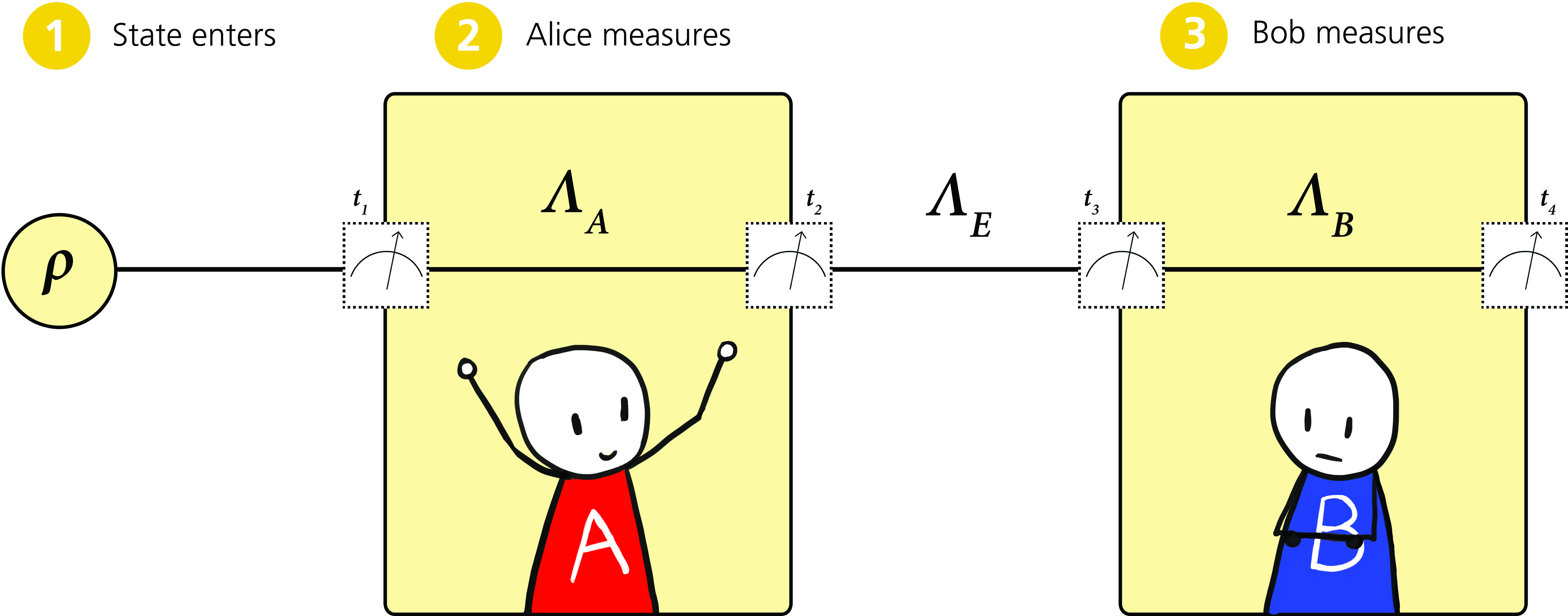} \protect\caption{Generalised temporal Clauser-Horne-Shimony-Holt scenario. The $\Lambda_{i}$ are arbitrary channels (completely positive trace preserving maps). A physical system in a state $\rho$ is first measured by Alice either at time $t_1$ or $t_2$, and then by Bob either at time $t_3$ or $t_4$. We give strong evidence that the correlations observed by Alice and Bob, i.e., expectation values of the product of their results, satisfy temporal Tsirelson's bound whenever the $\Lambda$s are independent and hence the dynamics is divisible (see Def.~\ref{defn:divisible}).} \label{fig:scenario}
\end{figure}

\section{Scenario}

Consider the situation depicted in Fig.~\ref{fig:scenario}, where two observers, Alice and Bob, make time-ordered measurements with Alice measuring before Bob. Each choose to measure at one of two times; Alice (Bob) measures either at time $t_1$ or $t_2$ ($t_3$ or $t_4$). We allow for intermediate dynamics between any consecutive measurement times, and label the corresponding general quantum channels as $\Lambda_A$ ($\Lambda_B$) for the evolution between $t_1$ and $t_2$ ($t_3$ and $t_4$), and $\Lambda_E$ for the dynamics between $t_2$ and $t_3$ (c.f. Fig.~\ref{fig:scenario}). From their measurement outcomes the following temporal Bell function is constructed
\begin{gather}
\label{bellfunction}
\mathcal{B} = E_{13} + E_{14} + E_{23} - E_{24}.
\end{gather}
Throughout this paper we will be calculating the temporal Bell function above with various assumptions and restrictions placed on $\Lambda_A$, $\Lambda_E$, and $\Lambda_B$. Note that channels $\Lambda_A$, $\Lambda_E$, and $\Lambda_B$ individually may not be divisible. That is, we only care about divisibility between the labs of Alice and Bob. Such a process is called 3-divisible~\cite{RPP.77.094001}.

Above, $E_{ij}$ is the correlation function between the $i$th measurement performed by Alice and the $j$th one by Bob, with $i,j$ denoting the instant of time at which measurements are performed, i.e., $i \in \{1,2\}$ and $j \in \{3,4\}$, which we call {\it time steps}. The correlation functions are defined as the expectation value of the product of measurement results obtained by Alice and Bob. They are calculated under the assumption that every experimental run is an independent event, i.e., without allowing for adaptive strategies where the measurement choices in a given run would depend on the outcomes obtained in previous experimental runs \footnote{In the temporal setting of the CHSH game, Alice and Bob can violate Tsirelson’s bound. However, this requires Bob to choose his inputs and outputs based on the previous run, which leads to indivisibility of the process.}. We consider dichotomic $\pm 1$ observables parametrised by their corresponding Bloch vectors $\vec \sigma \cdot \vec a_i$ and $\vec \sigma \cdot \vec b_j$ of Alice and Bob respectively. The initial state is parametrised by $\rho = \frac{1}{2} (\mathbbm{1} + \vec \sigma \cdot \vec v)$.

Note that our model generalises that of Ref.~\cite{Taylor2004} and reduces to their model when there is no dynamics between the two measurement choices of Alice and Bob. In this case, the temporal correlations can be turned into spatial correlations using the Choi-Jamio{\l}kowski isomorphism; one can therefore resort to standard tools to recover Tsirelson's bound. However, in our scenario the correlation functions for different settings are measured on different states---a consequence of the different channels that act between each pair of measurement times. Our scenario is therefore richer, despite the fact that Alice and Bob perform the same number of measurements. For instance, Tsirelson's bound cannot be violated in the model of Ref.~\cite{Taylor2004}, while it can be in our model (see Prop.~\ref{thm:indivisibleCPTPmaps}). We give strong evidence in this paper that it is the divisibility of the process that enforces Tsirelson's bound for our model.

\begin{defn}
A process is $N$-\emph{divisible} with respect to a set of times $\{t_1,t_2,\cdots, t_N\}$ when the maps relating the system state between any two time-steps can be described by a composition of completely positive trace-preserving (CPTP) maps between intermediate times: $\Lambda_{l;j} = \Lambda_{l;k} \circ \Lambda_{k;j}\; \forall \,j,\, k,\, l$ where $t_l> t_k> t_j$.\label{defn:divisible}
\end{defn}
Note that by defining $\Lambda_A\equiv\Lambda_{t_2;t_1}$, $\Lambda_B\equiv\Lambda_{t_4;t_3}$, and $\Lambda_E\equiv\Lambda_{t_3;t_2}$ independently, we are automatically imposing divisible dynamics with respect to $\{t_1,t_2,t_3,t_4\}$. Conversely, the process is indivisible when either $\Lambda_E$ or $\Lambda_B$ depend on Alice's measurement choice, as will be shown below. We will first study the classical bound of Eq.~\eqref{bellfunction} to reveal that it is satisfied if $\Lambda_E$ is an entanglement-breaking channel. We then give strong evidence that divisible dynamics leads to the temporal Tsirelson's bound. Finally, we study indivisible dynamics and its consequences on the temporal Bell's inequality.

\section{Entanglement-breaking dynamics and classicality}

We begin by choosing the channel $\Lambda_E$ to be any entanglement-breaking channel. Given an arbitrary entangled state of a composite system, a channel is entanglement-breaking and trace-preserving (EBT) if and only if its action on a subsystem yields a separable state. We will give strong evidence that, in this case and for $\vec v = 0$ (i.e., the initial state completely mixed), $\mathcal{B}_{EBT} \le \mathcal{B}_{\mathrm{cl}} = 2$, so that we retrieve the well-known classical bound. Entanglement breaking channels include as a subset all stochastic processes.

EBT channels can be viewed as LOCC (local operations and classical communication) channels. In their explicit form, EBT channels first involve a measurement giving some outcome $k$ and then a re-preparation of some state using this outcome $k$. The step between the measurement and re-preparation is classical. With the insertion of such a classical component---in between Alice and Bob, say---there is no entanglement between Alice and Bob. We would expect that the classical bound is obeyed. But this is not always the case.

We first show, by construction, that the assumption of $\vec v = 0$ is necessary, as relaxing it allows the Bell function in Eq.~\eqref{bellfunction} to exceed the classical bound. Take $\Lambda_A$ to be an identity channel, $\Lambda_E$ to be a projective measurement along vector $\vec{c}$ (clearly an EBT channel), and $\Lambda_B$ to always output state $\vec b$ independently of the input state.
Further assume that Bob's measurements are $\vec{b}_{1}=\vec{c}$ and $\vec{b}_{2}=\vec{b}$.
Then one easily verifies that the temporal correlations read: $E_{i3} = \vec a_i \cdot \vec c$ and $E_{i4} = \vec a_i \cdot \vec v$.
With this at hand, the Bell function in Eq.~\eqref{bellfunction}, which we label here $\mathcal{B}_{\mathrm{EBT}}$, reaches its maximum if $\vec{a}_{1}+\vec{a}_{2}$ is parallel to $\vec{c}$ and $\vec{a}_{1}-\vec{a}_{2}$ is parallel to $\vec{v}$, thus
leading to $\mathcal{B}_{EBT}  = 2\sqrt{1+|\vec{v}|^{2}}$. 
The classical bound is thus violated for all input states with $|\vec{v}|>0$. If the state is pure, $|\vec{v}|=1$ then the Tsirelson bound is obtained.

It is well known that if all the channels are identity channels, then the Tsirelson bound can be achieved regardless of the purity of the input system. Identity channels, or equivalently unitary channels, preserve the quantum coherence (and entanglement) of systems they act upon. Entanglement-breaking channels intuitively destroy entanglement and coherence. As the example above shows, the bias or \emph{purity} of the input state acts as a resource: it enables us to violate the classical bound. We will now give strong evidence that in the \emph{absence} of purity, i.e., if  $|\vec{v}|=0$, then EBT channels will enforce the classical bound. First we introduce the notion of unital channels: $\Lambda (\mathbbm{1}) =\mathbbm{1}.$

\begin{prop}
If $\Lambda_A$ is unital and $\Lambda_B$ is an arbitrary CPTP channel, $\Lambda_E$ is EBT, and $\vec v = 0$, then $\mathcal{B}_{EBT}\leq 2$.
\label{thm:EBTchannels}
\end{prop}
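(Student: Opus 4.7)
\emph{Proof proposal.} The plan is to exploit the measure-and-prepare form $\Lambda_E(\cdot) = \sum_k \tr[F_k\,\cdot]\,\sigma_k$ of an EBT channel (with $\{F_k\}$ a POVM and $\{\sigma_k\}$ density matrices) to exhibit the correlations as arising from a local hidden-variable (LHV) model in which the classical label $k$ plays the role of the hidden variable, and then invoke the usual CHSH bound $\le 2$.

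First I would compute $p(a,b|i,j)$ for each of the four setting pairs directly, using Alice's projectors $P_i^a = (\mathbbm{1} + a\,\vec\sigma\cdot\vec a_i)/2$, Bob's projectors $P_j^b$, the post-measurement update rule, and the initial state $\rho=\mathbbm{1}/2$. For $i=2$ I would use the unital assumption to observe that $\Lambda_A(\mathbbm{1}/2)=\mathbbm{1}/2$, so the state Alice sees at $t_2$ is maximally mixed and her post-measurement state is $P_2^a$ with probability $1/2$. The resulting joint distributions take the form
\begin{align*}
p(a,b|1,j) &= \tfrac{1}{2}\sum_k \tr[F_k\,\Lambda_A(P_1^a)]\,p(b|j,k),\\
p(a,b|2,j) &= \tfrac{1}{2}\sum_k \tr[F_k\,P_2^a]\,p(b|j,k),
\end{align*}
with $p(b|3,k)=\tr[P_3^b\sigma_k]$ and $p(b|4,k)=\tr[P_4^b\,\Lambda_B(\sigma_k)]$; both are properly normalised in $b$ because $\Lambda_B$ is trace-preserving.

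The key step, which I expect to be the main obstacle, is verifying that these four expressions form a \emph{common} LHV decomposition $p(a,b|i,j)=\sum_k p(k)\,p(a|i,k)\,p(b|j,k)$ with a single hidden-variable marginal $p(k)$. Summing over $a$ yields the candidate marginals $p(k|i{=}1)=\tfrac{1}{2}\tr[F_k\,\Lambda_A(\mathbbm{1})]$ and $p(k|i{=}2)=\tfrac{1}{2}\tr[F_k]$; their agreement is \emph{equivalent} to $\Lambda_A(\mathbbm{1})=\mathbbm{1}$, i.e.\ to the unital hypothesis. Without unitality, Alice's choice of measurement time would leak into the distribution of the classical label and the LHV structure would break down. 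Independence of the marginal from Bob's setting is automatic since $\Lambda_B$ and Bob's measurement act downstream of the generation of $k$, leaving $p(k)$ untouched.

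With the LHV model established, the conclusion is routine. Setting $\alpha(i,k)=\sum_a a\,p(a|i,k)\in[-1,1]$ and $\beta(j,k)=\sum_b b\,p(b|j,k)\in[-1,1]$, we have $E_{ij}=\sum_k p(k)\,\alpha(i,k)\,\beta(j,k)$, whence
\begin{align*}
\mathcal{B}_{EBT}=\sum_k p(k)\bigl[\alpha(1,k)\bigl(\beta(3,k)+\beta(4,k)\bigr)+\alpha(2,k)\bigl(\beta(3,k)-\beta(4,k)\bigr)\bigr].
\end{align*}
Using $|\alpha(i,k)|\le 1$ and the elementary identity $|x+y|+|x-y|=2\max(|x|,|y|)\le 2$ for $x,y\in[-1,1]$ then yields $\mathcal{B}_{EBT}\le 2\sum_k p(k)=2$, as claimed.
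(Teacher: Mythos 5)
Your proof is correct, and it takes a genuinely different route from the paper's. The paper proves the bound by decomposing the EBT channel into extremal classical--quantum maps (projective measurement along $\vec c$ followed by preparation of pure states $\vec r_\pm$), computing all four correlators explicitly in Bloch-vector form, and then bounding the resulting expression term by term before extending to general EBT channels by convexity. You instead use the measure-and-prepare form $\Lambda_E(\cdot)=\sum_k \tr[F_k\,\cdot]\,\sigma_k$ directly to exhibit a bona fide local-hidden-variable model with hidden variable $k$, and your identification of where the hypotheses enter is exactly right: $\vec v=0$ makes Alice's outcome at $t_1$ unbiased, and unitality of $\Lambda_A$ is precisely what makes the marginal $p(k)=\tfrac12\tr[F_k\Lambda_A(\mathbbm 1)]=\tfrac12\tr[F_k]$ independent of whether Alice measured at $t_1$ or $t_2$ --- the step that fails in the paper's counterexample with $|\vec v|>0$. (One small caveat: agreement of the two marginals for a \emph{fixed} POVM is implied by, but not strictly equivalent to, unitality; only the implication is needed.) Your approach buys conceptual clarity and generality: it needs neither the qubit-specific extremality results of Horodecki--Shor--Ruskai nor a separate convexity step, and it would survive in higher dimensions for any dichotomic measurements with specified post-measurement states. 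What the paper's explicit Bloch-vector computation buys in exchange is a set of closed-form correlators that are reused elsewhere in the text, e.g.\ to show that Tsirelson's bound is recovered when the EBT channel sits before Alice's second measurement rather than between Alice and Bob, and to motivate the conjecture for non-unital $\Lambda_A$.
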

\begin{proof}
For the moment we let $\Lambda_A$ be an arbitrary CPTP channel. Any CPTP map acting on a two-level system can be parameterized as $\Lambda(\vec r) = \vec L + \boldsymbol{\lambda} \vec r$, where $\vec L$ is a three-dimensional vector and $\lambda$ is a matrix. We take $\vec L=\vec A$ and $\boldsymbol{\lambda}= \boldsymbol{\alpha}$ ($\vec L=\vec B$ and $\boldsymbol{\lambda}=\boldsymbol{\beta}$) for $\Lambda=\Lambda_A$ ($\Lambda=\Lambda_B$). In the following, we will make use of the following properties of EBT channels: {\it i)} EBT channels form a convex set; {\it ii)} for two-level systems, the extremal points of EBT channels are extremal classical-quantum (extremal CQ) maps (cf. Theorem 5D in Ref.~\cite{Horodecki2003}).
Such extremal channels are one-dimensional projective measurements, and they output pure states. By Theorem 6 of Ref.~\cite{Horodecki2003}, an extremal EBT channel acting on a qubit is fully specified using only two Kraus operators.
Hence, if $\Lambda_E$ is an extremal CQ channel, it can equivalently be thought of as a projective measurement along $\vec{c}$ and a preparation of states with Bloch vectors $\vec{r}_+$ and $\vec{r}_-$. The resulting correlation functions are listed in the~\ref{appendix:CPTPEBTCPTP} and lead to the Bell function
\begin{align}
\mathcal{B}_{EBT} =& \frac{1}{2} \left[ (\vec c \cdot \vec a_2) ( \vec b_1 \cdot \vec s) - ( \vec c \cdot \vec a_2 ) ( \vec b_2 \cdot \boldsymbol{\beta} \vec s ) 
\right. + (\vec{a}_{2} \cdot \vec{A}) ( \vec{b}_{1}\cdot\vec{t}-2\vec{b}_{2}\cdot\vec{B}-\vec{b}_{2}\cdot\boldsymbol{\beta}\vec{t}) 
  \nonumber\\
 & \quad + \left. (\vec{c}\cdot\boldsymbol{\alpha}\vec{a}_{1}) (\vec{b}_{1}\cdot\vec{s}) 
 +(\vec{c}\cdot\boldsymbol{\alpha}\vec{a}_{1}) ( \vec{b}_{2}\cdot\boldsymbol{\beta}\vec{s}) \right],\label{eq:bellfunctionEBT}
\end{align}
where $\vec{s}=\vec{r}_{+}-\vec{r}_{-}$ and $\vec{t} = \vec{r}_{+}+\vec{r}_{-}$. 
{Let us now impose that $\Lambda_A$ is unital (hence $\vec{A}=0$). Next note that $\vec{b}_{2}\cdot (\boldsymbol{\beta}\vec{s}) = (\boldsymbol{\beta}^T \vec{b}_2) \cdot \vec s$, which gives us
\begin{align}
|\mathcal{B}_{EBT}| & = \dfrac{1}{2} \left| \left(\vec{c}\cdot \boldsymbol{\alpha}\vec{a}_{1} \right) \vec{s} \cdot\left(\vec{b}_{1}  + \boldsymbol{\beta}^T \vec{b}_2 \right) +\left(\vec{c}\cdot\vec{a}_{2}\right) \vec{s} \cdot \left( \vec{b}_{1} - \boldsymbol{\beta}^T \vec{b}_2 \right)  \right| \\
& \le \dfrac{1}{2} \left| \vec{c}\cdot \boldsymbol{\alpha}\vec{a}_{1} \right| \left|\vec{s} \cdot\left(\vec{b}_{1} + \boldsymbol{\beta}^T \vec{b}_2 \right) \right| 	+ \frac{1}{2}\left|\vec{c} \cdot\vec{a}_{2}\right| \left|\vec{s} \cdot \left( \vec{b}_{1} - \boldsymbol{\beta}^T \vec{b}_2 \right)  \right| \\
& \le \dfrac{1}{2} \left|\vec{s} \cdot\left(\vec{b}_{1}  +  \boldsymbol{\beta}^T \vec{b}_2 \right) \right| 	+ \frac{1}{2} \left|\vec{s} \cdot \left( \vec{b}_{1} - \boldsymbol{\beta}^T \vec{b}_2 \right)  \right| \\  
& \le \max\left\{\left|\vec{s} \cdot \vec{b}_1 \right|, \left|\vec s \cdot \boldsymbol{\beta}^T\vec{b}_2 \right| \right\} \le 2.
\end{align}
Going from the second line to the third line we used the fact that $|\vec{c}\cdot \boldsymbol{\alpha}\vec{a}_{1}| \le 1$ and $ |\vec{c}\cdot \vec{a}_{2}| \le 1$. In the final line we used the fact $|\vec s| \le 2$, $|\vec{b}_1| \le 1$, and $|\boldsymbol{\beta}^T \vec{b}_2| \le 1$. By convexity, the proof holds when $\Lambda_E$ is an arbitrary EBT channel.}
\end{proof}

\emph{Conjecture. An analytical proof for the upper bound when $\vec A \ne 0$ is non-trivial. However, a constrained numerical optimization shows that $\max|\mathcal{B}_{EBT}| = 2$ even in this case. We therefore conjecture that $\mathcal{B}_{EBT}$ lies within the interval $[-2,2]$. For this optimization all vectors were parameterized in spherical coordinates and we introduced matrix elements $\beta_{ij}$ such that $|\vec B + \beta \vec r| \le 1$ holds for $256$ vectors $\vec r$ distributed uniformly over the sphere, and similarly for $\vec A$ and $\alpha$. As the Bell function in Eq.~\eqref{eq:bellfunctionEBT} is smooth, the numerical optimisation should be robust.}

We have recovered the classical bound for the Bell function using a quantum system initially prepared in a maximally mixed state, where an EBT channel is placed between Alice's and Bob's measurements. The presence of such a channel between Alice and Bob in the evolution of the system is crucial: its absence would in general allow for the attainment of the quantum bound.

Suppose $\Lambda_A$ is an extremal CQ channel with projective measurements along $\vec{c}$ and output states with Bloch vectors $\vec{r}_+$ and $\vec{r}_-$. 
If the input state is maximally mixed (i.e., $\vec v = 0$), the Bell function reads
\begin{gather}
\mathcal{B}_{EBT} = \frac12( \vec{c} \cdot \vec{a}_{1}) (\vec{b}_{1}+\vec{b}_{2}) \cdot (\vec{r}_{+}-\vec{r}_{-})+\vec{a}_{2}\cdot (\vec{b}_{1}-\vec{b}_{2}).
\end{gather}
By choosing $\vec{b}_{1}+\vec{b}_{2} = 2 \cos \theta \, \vec w$ and $\vec{b}_{1}-\vec{b}_{2} = 2 \sin \theta \, \vec w_{\perp}$ with  $\vec w$ and $\vec w_{\perp}$ two orthonormal vectors, one directly verifies that $\mathcal{B}_{EBT}$ can achieve the temporal Tsirelson's bound.

On the other hand, we can still enforce  $\mathcal{B} \le 2$ even when the channel between Alice and Bob is entanglement-preserving. Let $\Lambda_A$ and $\Lambda_B$ both be identity channels, and consider the qubit channel of Werner type $\Phi_W=p \mathcal{I}+\left(1-p\right) \frac{1}{2} \mathbbm{1}, \, p\in\left[0,1\right]$, where $\mathcal{I}$ is the identity channel and $\frac{1}{2}\mathbbm{1}$ is the maximally incoherent channel, that replaces any input state with the maximally mixed one. The channel $\Phi_W$ is EBT if $p<1/3$. Taking $\Lambda_E= \Phi_W$, the Bell function is $\mathcal{B}_{W} =p \mathcal{B}_1$, where $\mathcal{B}_1 \le 2 \sqrt{2}$ is the Bell function corresponding to the identity channel, i.e., for $p=1$. For $1/3<p<1 /\sqrt{2}$ the channel $\Phi_W$ is entanglement-preserving, yet $\mathcal{B}_W < 2$.

We have now given strong evidence for the conditions which guarantee the classical bound $\mathcal{B}_{cl}=2$, namely that the channel connecting Alice and Bob is entanglement breaking and the initial state is completely mixed. We have also shown that the `converse' statement does not hold, i.e., having an entanglement-preserving channel between Alice and Bob and $|\vec{v}|>0$ does not guarantee a violation of the classical bound. We now move to considerations of general quantum channels.

\section{Temporal Tsirelson's bound for divisible processes}

We give strong evidence that for generic CPTP maps, the temporal Bell function still obeys the quantum upper bound $\mathcal{B}_q = 2 \sqrt{2}$. This is done through the following.

\begin{thm}
\label{thm:CPTPdynamics}
For $\Lambda_A$, $\Lambda_E$ arbitrary CPTP channels, and $\Lambda_B$ unital, we have $\mathcal{B}\leq 2\sqrt{2}$. \label{thm:CPTP-Part2}
\end{thm}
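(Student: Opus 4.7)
The plan is to combine the Bloch representation with the Heisenberg picture and a Russo--Dye-type operator-norm estimate. First, I would pass to the Heisenberg picture for Bob's $t_4$ observable: unitality of $\Lambda_B$ is precisely $\vec{B}=\vec{0}$, so $\Lambda_B^{\dagger}(\vec{\sigma}\cdot\vec{b}_4)=\vec{\sigma}\cdot\boldsymbol{\beta}^T\vec{b}_4$ carries no $\mathbb{1}$-shift. Setting $\vec{M}_{\pm}=\vec{b}_3\pm\boldsymbol{\beta}^T\vec{b}_4$ and using the identity $\sum_{a}a\,\Pi_i^a\rho_i\Pi_i^a=\tfrac{1}{2}\{A_i,\rho_i\}$, the two relevant sums of correlators collapse to anticommutator expectations
\[
E_{13}+E_{14}=\tfrac{1}{2}\tr\bigl[\{A_1,\Lambda_A^{\dagger}\Lambda_E^{\dagger}(\vec{\sigma}\cdot\vec{M}_+)\}\,\rho\bigr],\qquad E_{23}-E_{24}=\tfrac{1}{2}\tr\bigl[\{A_2,\Lambda_E^{\dagger}(\vec{\sigma}\cdot\vec{M}_-)\}\,\Lambda_A(\rho)\bigr].
\]

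Second, because $\Lambda_A$ and $\Lambda_E$ are trace-preserving, their adjoints are unital CP maps, and by the Russo--Dye theorem they are contractions in operator norm: $\|\Lambda_A^{\dagger}\Lambda_E^{\dagger}(\vec{\sigma}\cdot\vec{M}_+)\|_{\infty}\le|\vec{M}_+|$ and $\|\Lambda_E^{\dagger}(\vec{\sigma}\cdot\vec{M}_-)\|_{\infty}\le|\vec{M}_-|$. On qubits each back-propagated observable has the form $X_{\pm}=c_{\pm}\mathbb{1}+\vec{\sigma}\cdot\vec{x}_{\pm}$ with operator norm $|c_{\pm}|+|\vec{x}_{\pm}|$, and evaluating the anticommutator traces explicitly gives $\mathcal{B}=\vec{a}_1\cdot\vec{c}_1+\vec{a}_2\cdot\vec{c}_2$ with $\vec{c}_1=c_+\vec{v}+\vec{x}_+$ and $\vec{c}_2=c_-\vec{u}+\vec{x}_-$, where $\vec{u}$ is the Bloch vector of $\Lambda_A(\rho)$. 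The triangle inequality together with $|\vec{v}|,|\vec{u}|\le1$ yields $|\vec{c}_1|\le\|X_+\|_{\infty}\le|\vec{M}_+|$ and $|\vec{c}_2|\le\|X_-\|_{\infty}\le|\vec{M}_-|$.

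Third, a unital CPTP qubit channel is a Bloch-ball contraction, so $|\boldsymbol{\beta}^T\vec{b}_4|\le|\vec{b}_4|\le1$, and the parallelogram identity gives $|\vec{M}_+|^2+|\vec{M}_-|^2=2(|\vec{b}_3|^2+|\boldsymbol{\beta}^T\vec{b}_4|^2)\le 4$. A Cauchy--Schwarz inequality in $\mathbb{R}^6$ on $(\vec{a}_1,\vec{a}_2)$ and $(\vec{c}_1,\vec{c}_2)$ then yields $|\mathcal{B}|\le\sqrt{2(|\vec{c}_1|^2+|\vec{c}_2|^2)}\le\sqrt{2\cdot 4}=2\sqrt{2}$.

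The main subtlety will be the second step: converting the Russo--Dye operator-norm bound on the back-propagated Heisenberg observable into a sharp bound on the Schr\"odinger-picture vector $\vec{c}_k$. What makes $|\vec{c}_k|\le\|X_{\pm}\|_{\infty}$ hold with the right constant is the qubit identity $\|c\mathbb{1}+\vec{\sigma}\cdot\vec{x}\|_{\infty}=|c|+|\vec{x}|$ together with $|\vec{v}|\le1$; the role of unitality of $\Lambda_B$ is precisely to keep $\Lambda_B^{\dagger}(B_4)$ free of the $\mathbb{1}$-term proportional to $\vec{b}_4\cdot\vec{B}$ that would otherwise contaminate the clean grouping of $B_3\pm\tilde{B}_4$ into $\vec{\sigma}\cdot\vec{M}_{\pm}$ and prevent the final parallelogram estimate.
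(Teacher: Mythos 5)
Your proof is correct, and it takes a genuinely different route from the one in the paper. The paper works entirely in the Schr\"odinger picture: it computes all four correlators explicitly in the Bloch-affine parametrisation $\Lambda(\vec r)=\vec L+\boldsymbol{\lambda}\vec r$ (Appendix B), packages the Alice-side data into two forward-propagated vectors $\vec\xi_{1},\vec\xi_{2}$, proves $|\vec\xi_{1,2}|\le 1$ via the elementary lemma $|f\vec L+\boldsymbol{\lambda}\vec r|\le 1$ for $|f|\le1$, and then groups by Bob's settings to land on the standard CHSH form $\vec b_1\cdot(\vec\xi_1+\vec\xi_2)+\boldsymbol{\beta}^T\vec b_2\cdot(\vec\xi_1-\vec\xi_2)$, with unitality of $\Lambda_B$ killing the residual $(\vec B\cdot\vec b_2)$ term. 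You instead back-propagate Bob's observables in the Heisenberg picture, group by Alice's settings, and control the propagated observables by Russo--Dye contractivity before converting to Bloch vectors; the two arguments are essentially dual to one another, and both terminate in the same parallelogram-law/Cauchy--Schwarz estimate. Your version buys a cleaner accounting of the affine parts of $\Lambda_A$ and $\Lambda_E$ (they are absorbed into the scalar $c_\pm$ via $\|c\mathbb{1}+\vec\sigma\cdot\vec x\|_\infty=|c|+|\vec x|$, with $|\vec v|,|\vec u|\le1$ doing the rest), it dispenses with the explicit correlator computation, and it isolates exactly why unitality of $\Lambda_B$ is needed: a nonzero $\vec B$ would put an identity component $\pm(\vec B\cdot\vec b_4)\mathbb{1}$ into $B_3\pm\Lambda_B^\dagger(B_4)$, inflating both operator norms additively and defeating the parallelogram bound --- which is the Heisenberg-picture counterpart of the obstruction the paper leaves as a conjecture. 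The paper's version, in turn, produces the explicit form of $\mathcal{B}_q$ including the $\vec B$-dependent term, which it reuses for the numerical conjecture and for the unitary-channel analysis in Appendix C. Neither approach resolves the $\vec B\ne 0$ case analytically.
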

\begin{proof}
First let all three channels be arbitrary. We use the same sort of parametrisation for $\Lambda_{A,B,E}$ introduced in the proof of Proposition~\ref{thm:EBTchannels}. The input state is specified by the Bloch vector $\vec v$. We introduce new vectors $\vec \xi_1 = (\vec v \cdot \vec a_1) \vec E + \gamma [(\vec v \cdot \vec a_1) \vec A + \alpha \vec a_1]$ and $\vec \xi_2 =\gamma \vec a_2+ [(\vec A + \alpha \vec v)\cdot \vec a_2] \vec E $. In~\ref{appendix:CPTPdynamics} it is shown that the Bell function can then be rewritten as
\begin{gather}
\mathcal{B}_q  =  \vec b_1 \cdot (\vec \xi_1 + \vec \xi_2) + b_2 \cdot \beta (\vec \xi_1  -  \vec \xi_2) + [\vec v \cdot \vec a_1 - (\vec A + \alpha \vec v)\cdot \vec a_2] (\vec B \cdot \vec b_2).
\label{BQ}
\end{gather}
Note that for any channel $\Lambda(\vec r) = \vec L + \lambda \vec r$, we have $|f \vec L + \lambda \vec r| \le 1$, with $f$ a scalar such that $|f| \le 1$. Therefore, both our new vectors $\vec \xi_{1,2}$ have at most unit length. If we now impose that channel $\Lambda_B$ is unital, i.e., $\vec B = 0$, the last term of Eq.~\eqref{BQ} vanishes and $\mathcal{B}_q$ takes the form of the usual Bell function. Hence $\mathcal{B}_q \le 2 \sqrt{2}$.
\end{proof}

\emph{Conjecture.} We conjecture the last theorem also holds when $\vec B \ne 0$. We have verified numerically that the same bound holds for a non-zero $\vec B$. We also show that if $\Lambda_A$, $\Lambda_B$ are unitary and $\Lambda_E$ is an arbitrary CPTP map, the temporal Tsirelson's bound is achieved only when $\Lambda_E$ is a unitary map (see~\ref{appendix:CPTPdynamics}).

The proofs of both of our main Theorems are partially numerical. We point out that fully analytical arguments would be cumbersome. They are possible in the spatial scenario owing to the convexity of the set of states, which reduces the problem to a proof using pure states only. Furthermore, the use of the Schmidt decomposition reduces the number of variables over which one has to optimise. In the temporal case, on the other hand, a reduction in the number of variables is less forthcoming. Even if we consider only extremal maps, parameter counting shows that there are $30$ variables to be optimised [c.f.~\ref{appendix:CPTPdynamics}]. As such extremal CPTP maps are not unitary, the problem is clearly non-trivial.

We have given strong evidence that any divisible quantum process has its corresponding Bell function bounded from above by Tsirelson's bound. For unitary dynamics this was effectively shown in Refs.~\cite{Taylor2004} and~\cite{Fritz2010}, but now it is clear that the relevant property of unitary transformation is their divisibility. Furthermore, non-unitary dynamics can also lead to correlations that achieve Tsirelson's bound as long as the input state is biased, i.e., $|\vec v| > 0$.

Finally, note that it is under the assumption of divisibility of $\Lambda_A$, $\Lambda_E$, and $\Lambda_B$ we reach Eq.~\eqref{BQ}. If the process is indivisible then all three channels could be a function of Alice and Bob's measurement choice as well as the corresponding outcome. The analog of Eq.~\eqref{BQ} for such a case could have many more parameters. 

\begin{figure}
\centering \includegraphics[width=0.6\columnwidth]{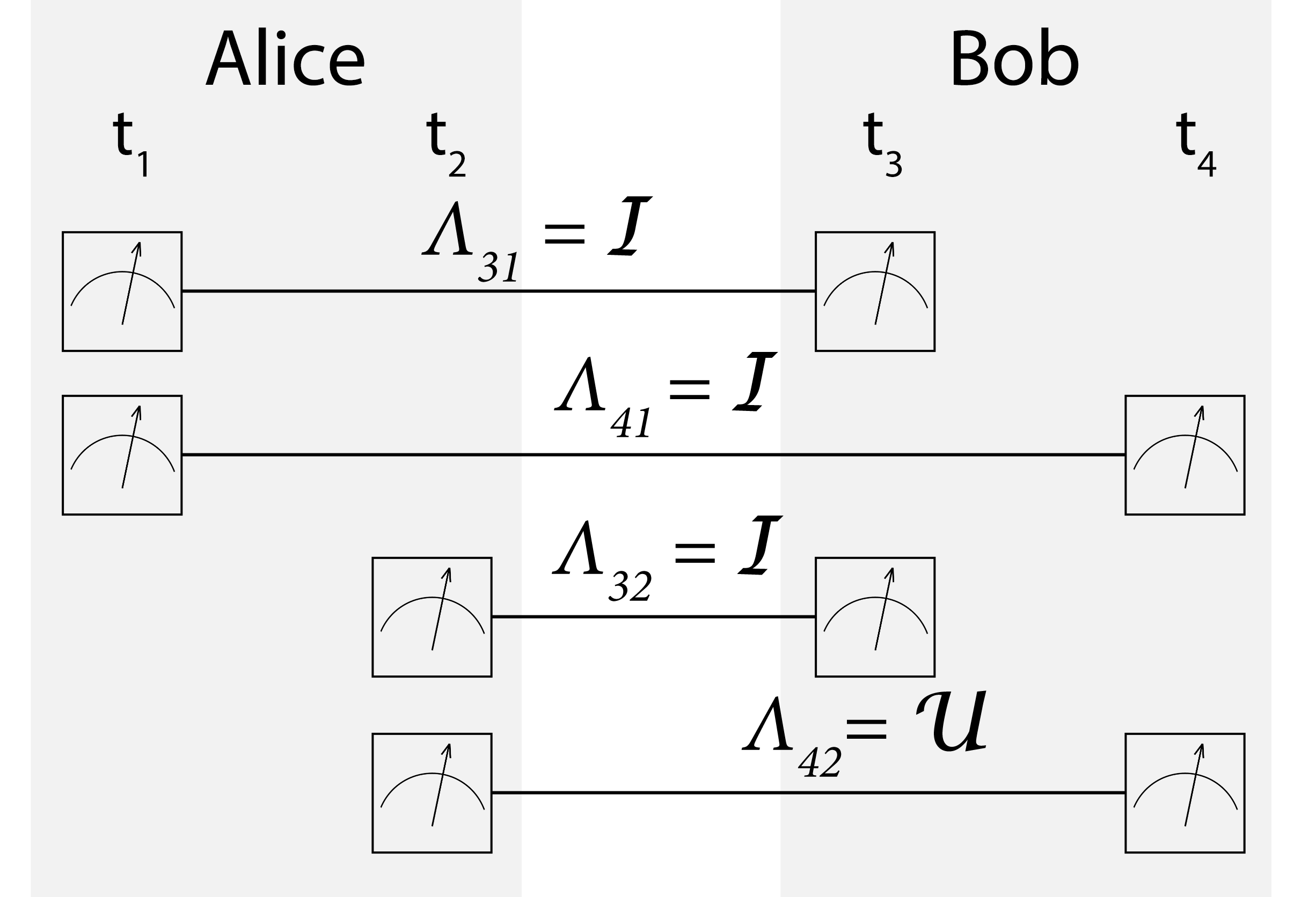}
\protect\caption{Example of an indivisible process. Alice measures either at $t_{1}$ or $t_{_{2}}$, and Bob measures either at $t_{3}$ or $t_{4}$. If Alice measures at $t_{2}$, then there is a unitary channel in between $\left\{ t_{3},t_{4}\right\}$. All other channels are identity channels $\mathcal{I}$. See main text for a proof that this dynamics is not divisible and leads to the maximal algebraic value of the temporal Bell function.
\label{fig:Nondivisiblemap_quantum}}
\end{figure}

\section{Indivisible processes}

A process is indivisible when it does not satisfy Definition~\ref{defn:divisible}. Such processes can be characterised by CPTP transformations using the superchannel formalism~\cite{modi2012operational, PhysRevLett.114.090402}.

\begin{prop}
Some indivisible processes may yield $\mathcal{B}_{ID} > 2\sqrt{2}$. \label{thm:indivisibleCPTPmaps}
\end{prop}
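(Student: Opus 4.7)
The plan is to prove the proposition by exhibiting the concrete example drawn in Fig.~\ref{fig:Nondivisiblemap_quantum} and verifying two things about it: that the resulting process is indivisible in the sense of Def.~\ref{defn:divisible}, and that the temporal Bell function attains a value strictly above $2\sqrt{2}$ (in fact reaches the algebraic maximum $4$). Concretely, I would take $\Lambda_A=\Lambda_E=\mathcal{I}$ and let the map acting between $t_3$ and $t_4$ be correlated with Alice's setting choice: it is $\mathcal{I}$ when Alice measures at $t_1$, and a fixed unitary $U$ when Alice measures at $t_2$. The initial state can be arbitrary since, as will become clear below, it drops out of the calculation.

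The first step is to argue that this process is indivisible. Def.~\ref{defn:divisible} requires the existence of a single CPTP map $\Lambda_{t_4;t_3}$, acting on the system alone, that reproduces the state evolution in every run. Here the candidate map is $\mathcal{I}$ on one branch and $U$ on the other, and no fixed system-only channel can equal both, so the process is genuinely non-factorisable across the $\{t_2,t_3\}$ cut; it must instead be described by a superchannel in the sense of Refs.~\cite{modi2012operational, PhysRevLett.114.090402}. This is exactly the failure mode flagged right after Def.~\ref{defn:divisible}, where ``$\Lambda_E$ or $\Lambda_B$ depend on Alice's measurement choice''.

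Next I would compute the four correlators. Since every map inside a single branch is unitary and the post-measurement states of projective qubit measurements are pure, a direct calculation (or simply the standard temporal CHSH expression with identity dynamics) gives $E_{ij}=\vec{a}_i\cdot\vec{b}_j$ whenever $\Lambda_B=\mathcal{I}$, independent of $\vec v$. For $E_{24}$, the unitary $U$ between $t_3$ and $t_4$ can be absorbed into Bob's observable via the Heisenberg picture, yielding $E_{24}=\vec{a}_2\cdot\vec{b}'_2$, where $\vec{b}'_2$ is the Bloch vector defined by $U^{\dagger}(\vec{b}_2\cdot\vec{\sigma})U=\vec{b}'_2\cdot\vec{\sigma}$. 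Substituting into Eq.~\eqref{bellfunction} gives
\begin{equation*}
\mathcal{B}_{ID} \;=\; \vec{a}_1\cdot(\vec{b}_1+\vec{b}_2)\;+\;\vec{a}_2\cdot(\vec{b}_1-\vec{b}'_2).
\end{equation*}

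The final step is to choose parameters that exceed $2\sqrt{2}$. Picking $\vec{b}_1=\vec{b}_2$ makes $|\vec{b}_1+\vec{b}_2|=2$, and choosing $U$ to be a $\pi$-rotation about any axis orthogonal to $\vec{b}_1$ sends $\vec{b}'_2=-\vec{b}_1$, so $|\vec{b}_1-\vec{b}'_2|=2$. Aligning $\vec{a}_1$ and $\vec{a}_2$ with these two vectors respectively saturates both Cauchy--Schwarz terms and gives $\mathcal{B}_{ID}=4$, the algebraic maximum. There is no real obstacle in the calculation itself; the only subtle point, which I would emphasise, is that the indivisibility lives in the correlation between Alice's setting and $\Lambda_B$, not in any individual map failing to be CPTP, so the example genuinely lies outside the scope of Theorem~\ref{thm:CPTPdynamics} rather than evading it via some pathological $\Lambda_B$.
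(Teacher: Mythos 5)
Your proposal is correct and follows essentially the same route as the paper: the identical example from Fig.~\ref{fig:Nondivisiblemap_quantum}, the same Bell function (your $\vec a_2\cdot\vec b_2'$ with $\vec b_2'=\boldsymbol{R_u}^T\vec b_2$ is just the Heisenberg-picture rewriting of the paper's $(\boldsymbol{R_u}\vec a_2)\cdot\vec b_2$), and the same choice of aligned settings with a $\pi$-rotation to reach $\mathcal{B}_{ID}=4$. Your indivisibility argument (no single $\Lambda_{t_4;t_3}$ can serve both branches) is a compressed but equivalent version of the paper's explicit proof by contradiction via the composition law.
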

\begin{proof}
In Fig.~\ref{fig:Nondivisiblemap_quantum}, the channels $\Lambda_{31}$, $\Lambda_{41}$ and $\Lambda_{32}$ are identity channels and the last channel $\Lambda_{42}=\mathcal{U}$ is unitary. As the action of the unitary channel on a state is to rotate its Bloch vector, let $R_u$ be the equivalent rotation of $\mathcal{U}$. The Bell function is thus
\begin{gather}
\mathcal{B}_{ID}=(\vec{a}_1+\vec{a}_2)\cdot\vec{b}_1 +[\vec{a}_1-\left(\boldsymbol{R_u}\vec{a}_2\right)]\cdot\vec{b}_2.
\end{gather}
By letting $\vec{a}_1=\vec{a}_2=\vec{b}_1=\vec{b}_2$ and the unitary transformation $\mathcal{U}$ be such that $\boldsymbol{R_u} \vec{a}_2=-\vec{b}_2 =-\vec{a}_2$, we get $\mathcal{B}_{ID} = 4$, i.e., we violate Tsirelson's bound and achieve the maximum algebraic value.

We now prove, by contradiction, that this process is indivisible. As $\Lambda_{32}=\mathcal{I}$ and $\Lambda_{42}=\mathcal{U}$, divisibility would imply that $\Lambda_{43}=\mathcal{U}$. Divisibility of $\Lambda_{43}=\mathcal{U}$ and  $\Lambda_{41}=\mathcal{I}$ then imply that $\Lambda_{31}=\mathcal{U}^\dagger$. This contradicts the original assumption $\Lambda_{31}=\mathcal{I}$. The process is thus indivisible.
\end{proof}

In addition to the above, note that such an indivisible process can also be realised on a classical system. As all the Bloch vectors are either parallel or antiparallel, they encode classical information only. On the other hand, indivisibility is not sufficient to exceed the classical bound $\mathcal{B}_{cl} =2 $. If all the maps are the maximally incoherent map $\mathbbm{1}$, which always outputs $\mathbbm{1}/2$, then $B_{\mathbbm{1}}=0$. Denote the maps of the scenario in Proposition~\ref{thm:indivisibleCPTPmaps} as $\{\mathcal{I}, \mathcal{U}\}$. If we take the convex combination $p\{\mathcal{I}, \mathcal{U}\} + (1-p)\mathbbm{1}$ then $B = pB_{\{\mathcal{I}, \mathcal{U}\}}+ (1-p) B_{\mathbbm{1}} =  pB_{\{\mathcal{I}, \mathcal{U}\}} \leq 4p <2 $ for suitable choices of $p$, yet this remains indivisible.

\begin{table*}[b]
\centering
\begin{tabular}{l||c|c|c}
   & \, No superposition \, & \, Superposition \, & \, Superposition \\ 
   &  &  without input bias & with input bias \\ \hline \hline 
 Classical (EBT) divisible &  2 & 2 &  $2\sqrt{2}$\\ \hline
 Quantum divisible \,\, &  contained in classical & $2\sqrt{2}$ & $2\sqrt{2}$\\ \hline
 Indivisible & 4 & 4 & 4 \\
\end{tabular}
\caption{Summary of the results. The type of channel between Alice and Bob is presented in rows. Divisible classical processes are composed of stochastic maps in a fixed basis, whereas divisible quantum processes are described in Def.~\ref{defn:divisible}. If these maps act on states that are diagonal in a fixed basis (No superposition column) the correlations they allow for of course satisfy the classical bound. We give strong evidence that the same classical bound is satisfied for entanglement-breaking channels (EBT) between Alice and Bob when there is no input bias; If there is an input bias then Tsirelson's bound can be obtained. Composition of quantum maps gives quantum divisible processes, which include unitary dynamics. These processes can at most lead to Tsirelson's bound for the temporal Bell function regardless of input bias. Finally, indivisible processes can achieve the algebraic maximum of the Bell function.}
\label{tab}
\end{table*}

\section{Conclusions}

Contrary to the spatial Bell scenario, it is no longer possible to derive a non-trivial bound on the temporal Bell's inequalities which would be independent of the physical systems themselves. This universality of the original Tsirelson's bound is a consequence of essentially static spatial setting, i.e., the particles are only prepared and measured. Nevertheless we have given strong evidence here for a simple condition on the evolving physical system which guarantees that the temporal Tsirelson's bound is satisfied, our results are concisely outlines in Table 1. For the bound to hold the dynamics has to be divisible. Channel divisibility therefore plays a role for correlations in time similar to that played by information causality, macroscopic locality, etc. in space-like scenarios~\cite{infocaus, ProcRolSocA.466.881, NJP.14.063024, FoundPhys.43.805, D.14.239, PRL.112.040401, arXiv1507.07514}. As a consequence, using an argument similar to that in Ref.~\cite{infocaus}, intrinsically indivisible dynamics could be used to violate communication complexity in time. 

Another interesting consideration is that a unitary process looks like an indivisible one from the classical perspective~\cite{PhysRevA.71.032325}. For instance, consider a three time-step process, where a measurement over the basis of eigenstates of $\sigma_z$ is made at any two time steps and between each time step the Hadamard gate is applied. The correlation functions involving the middle time step always vanish, while the correlation function between the initial and final time steps is $1$. At the level of measurement outcomes, the dynamics involving the middle time step are described by fully noisy maps, while the evolution between the initial and final time steps is described by the identity channel. Therefore, from the classical perspective the channel is indivisible, while from the quantum perspective it is perfectly divisible. One can thus conjecture that such a distinction is responsible for the violation of the classical bound of temporal Bell's inequalities and for reaching Tsirelson's bound with unitary processes. In Ref.~\cite{PhysRevLett.113.050401} two-level Leggett-Garg inequalities are constructed from unitary dynamics on multilevel systems. In this case too, the dynamics of the `two levels' will not be divisible as the extra levels of the system act like a structured environment that carry memory.

Finally, note that the indivisible process in Fig.~\ref{fig:Nondivisiblemap_quantum} is no-signalling if the input state is maximally mixed, i.e., the outcomes of Bob do not reveal any information about the settings of Alice. Hence, it is not the `signalling' that maximises the Bell function, but rather it is the non-Markovian memory of the process~\cite{arXiv:1512.00589}. In fact, we never imposed a no-signalling condition even for divisible processes; unlike in space-like correlated systems, time-like processes can of course carry information forward (from Alice to Bob).

\section{Acknowledgments}

This work is supported by the National Research Foundation and Ministry of Education of Singapore Grant No. RG98/13. MP is supported by the EU FP7 grant TherMiQ (Grant Agreement 618074), the John Templeton Foundation (Grant No. 43467), and the UK EPSRC (EP/M003019/1).

\bibliography{paper}

\appendix

\section[Explicit correlation functions of Proposition 2]{Explicit correlation functions when $\Lambda_E$ is EBT (Part of Proposition 2)}\label{appendix:CPTPEBTCPTP}

Let $\Lambda_A$ and $\Lambda_B$ be CPTP and $\Lambda_E$ be EBT. We parametrise them as follows:
\begin{align}
\Lambda_{A}\left(\dfrac{1}{2}\left(\mathbbm{1}+\vec{\sigma}\cdot\vec{r}\right)\right) & =\dfrac{1}{2}\left(\mathbbm{1}+\vec{\sigma}\cdot\left(\vec{A}+\boldsymbol{\alpha}\vec{r}\right)\right), \\
\Lambda_{B}\left(\dfrac{1}{2}\left(\mathbbm{1}+\vec{\sigma}\cdot\vec{r}\right)\right) & =\dfrac{1}{2}\left(\mathbbm{1}+\vec{\sigma}\cdot\left(\vec{B}+\boldsymbol{\beta}\vec{r}\right)\right),\\
\Lambda_{E}\left(\rho\right) & =\sum_{m=\pm 1} R_m \tr \left[{P}_{\vec{c}}^m \rho \right],
\end{align}
where $\vec{A}$, and $\vec{B}$ are vectors, and $\boldsymbol{\alpha}$ and $\boldsymbol{\beta}$ are matrices such that $| \vec{A}+\boldsymbol{\alpha}\vec{v}| \leq1$ etc for any $| \vec{v}| \leq1$. The pure states $R_m$ have Bloch vectors $\vec{r}_\pm$ for $m = \pm 1$. Furthermore, denote by $P_{\vec{a}_{i}}^{k} = \frac{1}{2}(\mathbbm{1} + k \vec{\sigma} \cdot \vec a_i)$ the post-measurement state of Alice if she obtains result $k = \pm$ at time $t_i$. Similarly, $P_{\vec{b}_{j}}^{l} = \frac{1}{2}(\mathbbm{1} + l \vec{\sigma} \cdot \vec b_j)$ denotes the post-measurement state of Bob if he observes outcome $l = \pm 1$ at time $t_j$. With this notation the correlation functions read:
\begin{align}
E_{13} =& \sum_{k,\ell=\pm1}k\cdot\ell\cdot\tr\left[\rho P_{\vec{a}_{1}}^{k} \right] \tr\left[\Lambda_{EBT}\left(\Lambda_{A}\left(P_{\vec{a}_{1}}^{k}\right)\right)P_{\vec{b}_{1}}^{\ell}\right]\nonumber\\
 =&\dfrac{1}{2}\left\{ \left(\vec{c}\cdot\boldsymbol{\alpha}\vec{a}_{1}\right)\vec{b}_{1}\cdot\left(\vec{r}_{+}-\vec{r}_{-}\right)+\left(\vec{v}\cdot\vec{a}_{1}\right)\vec{b}_{1}\cdot\left(\vec{r}_{+}+\vec{r}_{-}\right)\right\}\nonumber\\
&+\dfrac{1}{2}\left\{\left(\vec{v}\cdot\vec{a}_{1}\right)\left(\vec{c}\cdot\vec{A}\right)\vec{b}_{1}\cdot\left(\vec{r}_{+}-\vec{r}_{-}\right)\right\}
\end{align}
\begin{align}
E_{14} =&\sum_{k,\ell=\pm1}k\cdot\ell\cdot\tr\left[\rho P_{\vec{a}_{1}}^{k}\right]\tr\left[\Lambda_{B}\left(\Lambda_{EBT}\left(\Lambda_{A}\left(P_{\vec{a}_{1}}^{k}\right)\right)\right)P_{\vec{b}_{2}}^{\ell}\right]\nonumber\\
 =&\dfrac{1}{2}\left(\vec{c}\cdot\boldsymbol{\alpha}\vec{a}_{1}\right)\vec{b}_{2}\cdot\boldsymbol{\beta}\left(\vec{r}_{+}-\vec{r}_{-}\right)+\left(\vec{v}\cdot\vec{a}_{1}\right)\left(\vec{b}_{2}\cdot\vec{B}\right)\nonumber\\
&+\dfrac{1}{2}\left(\vec{v}\cdot\vec{a}_{1}\right)\vec{b}_{2}\cdot\boldsymbol{\beta}\left(\vec{r}_{+}\left(1+\vec{c}\cdot\vec{A}\right)+\vec{r}_{-}\left(1-\vec{c}\cdot\vec{A}\right)\right)
\end{align}
\begin{align}
E_{23} =&\sum_{k,\ell=\pm1}k\cdot\ell\cdot\tr\left[\Lambda_{A}\left(\rho\right)P_{\vec{a}_{2}}^{k}\right]\tr\left[\Lambda_{EBT}\left(P_{\vec{a}_{2}}^{k}\right)P_{\vec{b}_{1}}^{\ell}\right]\nonumber\\
 =&\dfrac{1}{2}\left\{ \left(\vec{c}\cdot\vec{a}_{2}\right)\vec{b}_{1}\cdot\left(\vec{r}_{+}-\vec{r}_{-}\right)+\left(\vec{a}_{2}\cdot\vec{w}\right)\vec{b}_{1}\cdot\left(\vec{r}_{+}+\vec{r}_{-}\right)\right\}
\end{align}
\begin{align}
E_{24} =&\sum_{k,\ell=\pm1}k\cdot\ell\cdot\tr\left[\Lambda_{A}\left(\rho\right)P_{\vec{a}_{2}}^{k}\right]\tr\left[\Lambda_{B}\left(\Lambda_{EBT}\left(P_{\vec{a}_{2}}^{k}\right)\right)P_{\vec{b}_{2}}^{\ell}\right]\nonumber\\
 =& \dfrac{1}{2}\left(\vec{c}\cdot\vec{a}_{2}\right)\vec{b}_{2}\cdot\boldsymbol{\beta}\left(\vec{r}_{+}-\vec{r}_{-}\right)+\left(\vec{a}_{2}\cdot\vec{w}\right)\left(\vec{b}_{2}\cdot\vec{B}\right)\nonumber\\
 &+\dfrac{1}{2}\left(\vec{a}_{2}\cdot\vec{w}\right)\vec{b}_{2}\cdot\boldsymbol{\beta}\left(\vec{r}_{+}+\vec{r}_{-}\right),
\end{align}
where $\vec{w}=\vec{A}+\boldsymbol{\alpha}\vec{v}$.


\section{Derivation of Eq. (4) in Theorem 3 }\label{appendix:CPTPdynamics}

Let us parameterise arbitrary CPTP maps, $\Lambda_{A}$, $\Lambda_{E}$ and $\Lambda_{B}$ as
\begin{align}
\Lambda_{A}\left(\dfrac{1}{2}\left(\mathbbm{1}+\vec{\sigma}\cdot\vec{r}\right)\right) & =\dfrac{1}{2}\left(\mathbbm{1}+\vec{\sigma}\cdot\left(\vec{A}+\boldsymbol{\alpha}\vec{r}\right)\right)\\
\Lambda_{E}\left(\dfrac{1}{2}\left(\mathbbm{1}+\vec{\sigma}\cdot\vec{r}\right)\right) & =\dfrac{1}{2}\left(\mathbbm{1}+\vec{\sigma}\cdot\left(\vec{E}+\boldsymbol{\gamma}\vec{r}\right)\right)\\
\Lambda_{B}\left(\dfrac{1}{2}\left(\mathbbm{1}+\vec{\sigma}\cdot\vec{r}\right)\right) & =\dfrac{1}{2}\left(\mathbbm{1}+\vec{\sigma}\cdot\left(\vec{B}+\boldsymbol{\beta}\vec{r}\right)\right)
\end{align}
where $\vec{A}$, $\vec{E}$ and $\vec{B}$ are vectors, and $\boldsymbol{\alpha}$, $\boldsymbol{\gamma}$ and $\boldsymbol{\beta}$ are matrices such that $| \vec{A}+\boldsymbol{\alpha}\vec{r}| \leq1$
etc for any $| \vec{r}| \leq1$.
The correlation functions are
\begin{align}
E_{13} & =\sum_{k,\ell=\pm1}k\cdot\ell\tr\left[\rho P_{\vec{a}_{1}}^{k}\right]\tr\left[\Lambda_{E}\left(\Lambda_{A}\left(P_{\vec{a}_{1}}^{k}\right)\right)P_{\vec{b}_{1}}^{\ell}\right] \nonumber\\
 & =\left(\boldsymbol{\gamma\alpha}\vec{a}_{1}\right)\cdot\vec{b}_{1}+\left(\vec{v}\cdot\vec{a}_{1}\right)\left[\vec{E}+\boldsymbol{\gamma}\vec{A}\right]\cdot\vec{b}_{1}\\
E_{14} & =\sum_{k,\ell=\pm1}k\cdot\ell\tr\left[\rho P_{\vec{a}_{1}}^{k}\right]\tr\left[\Lambda_{B}\left(\Lambda_{E}\left(\Lambda_{A}\left(P_{\vec{a}_{1}}^{k}\right)\right)\right)P_{\vec{b}_{2}}^{\ell}\right] \nonumber\\
 & =\left(\boldsymbol{\beta\gamma\alpha}\vec{a}_{1}\right)\cdot\vec{b}_{2}+\left(\vec{v}\cdot\vec{a}_{1}\right)\left[\vec{B}+\boldsymbol{\beta}\left(\vec{E}+\boldsymbol{\gamma}\vec{A}\right)\right]\cdot\vec{b}_{2}\\
E_{23} & =\sum_{k,\ell=\pm1}k\cdot\ell\tr\left[\Lambda_{A}\left(\rho\right)P_{\vec{a}_{2}}^{k}\right]\tr\left[\Lambda_{E}\left(P_{\vec{a}_{2}}^{k}\right)P_{\vec{b}_{1}}^{\ell}\right] \nonumber\\
 & =\left(\boldsymbol{\gamma}\vec{a}_{2}\right)\cdot\vec{b}_{1}+\left[\vec{A}+\boldsymbol{\alpha}\vec{v}\right]\cdot\vec{a}_{2}\left(\vec{E}\cdot\vec{b}_{1}\right)\\
E_{24} & =\sum_{k,\ell=\pm1}k\cdot\ell\tr\left[\Lambda_{A}\left(\rho\right)P_{\vec{a}_{2}}^{k}\right]\tr\left[\Lambda_{B}\left(\Lambda_{E}\left(P_{\vec{a}_{2}}^{k}\right)\right)P_{\vec{b}_{2}}^{\ell}\right] \nonumber\\
 & =\left(\boldsymbol{\beta\gamma}\vec{a}_{2}\right)\cdot\vec{b}_{2}+\left[\vec{A}+\boldsymbol{\alpha}\vec{v}\right]\cdot\vec{a}_{2}\left[\vec{B}+\boldsymbol{\beta}\vec{E}\right]\cdot\vec{b}_{2}.
\end{align}
Hence, the Bell function is 
\begin{align}
\mathcal{B}_{Q} 
=&\left(\boldsymbol{\gamma\alpha}\vec{a}_{1}+\boldsymbol{\gamma}\vec{a}_{2}\right)\cdot\vec{b}_{1}+\boldsymbol{\beta}\left(\boldsymbol{\gamma\alpha}\vec{a}_{1}-\boldsymbol{\gamma}\vec{a}_{2}\right)\cdot\vec{b}_{2}\nonumber\\
 & +\left(\vec{v}\cdot\vec{a}_{1}\right)\left\{ \left[\vec{E}+\boldsymbol{\gamma}\vec{A}\right]\cdot\vec{b}_{1}+\left[\vec{B}+\boldsymbol{\beta}\left(\vec{E}+\boldsymbol{\gamma}\vec{A}\right)\right]\cdot\vec{b}_{2}\right\} \nonumber\\
 & +\left[\vec{A}+\boldsymbol{\alpha}\vec{v}\right]\cdot\vec{a}_{2}\left\{ \vec{E}\cdot\vec{b}_{1}-\left[\vec{B}+\boldsymbol{\beta}\vec{E}\right]\cdot\vec{b}_{2}\right\}.
\end{align}
Substituting the variables $\vec \xi_1$ and $\vec \xi_2$ defined in the main text one directly recovers Eq. (4) of the main text.


\section[Tsirelson's bound is achieved when all maps are unitary]{For unitary $\Lambda_A$ and $\Lambda_B$, Tsirelson's bound is achieved only when $\Lambda_E$ is also unitary}\label{appendix:CPTPandIdentity}

Let $\rho$ be the input density matrix with Bloch vector $\vec{v}$ and let $\Lambda_{E}$ CPTP parameterised by matrix $\boldsymbol{\gamma}$ and shift vector $\vec{E}$. Let $\Lambda_{A}$ and $\Lambda_{B}$ be unitary channels represented by the matrices $\boldsymbol{\alpha}$ and $\boldsymbol{\beta}$, respectively. 
The Bell's parameter reads
\begin{align}
\mathcal{B}_{CPTP} = & \left(\boldsymbol{\gamma}\boldsymbol{\alpha}\vec{a}_{1}+\boldsymbol{\gamma}\vec{a}_{2}\right)\cdot\vec{b}_{1}+\left(\boldsymbol{\beta}\boldsymbol{\gamma}\boldsymbol{\alpha}\vec{a}_{1}+\boldsymbol{\beta}\boldsymbol{\gamma}\vec{a}_{2}\right)\cdot\vec{b}_{2} \nonumber\\
 &  +\left(\vec{v}\cdot\vec{a}_{1}\right)\left(\vec{E}\cdot\vec{b}_{1}+\boldsymbol{\beta}\vec{E}\cdot\vec{b}_{2}\right) \nonumber\\
 & +\left(\boldsymbol{\alpha}\vec{v}\cdot\vec{a}_{2}\right)\left(\vec{E}\cdot\vec{b}_{1}-\boldsymbol{\beta}\vec{E}\cdot\vec{b}_{2}\right)
\end{align}
which simplifies to
\begin{gather}
\mathcal{B}_{CPTP}=\vec \eta_1 \cdot (\vec{b}_{1}+\vec{b}_{2}^{\prime}) + \vec \eta_2 \cdot (\vec{b}_{1}-\vec{b}_{2}^{\prime}).\label{eq:blochsphereBell}
\end{gather}
after introduction of rotated vectors $\vec{a}_{1}^{\prime}=\boldsymbol{\alpha}\vec{a}_{1}$, $\vec{v}^{\prime} = \boldsymbol{\alpha} \vec{v}$ and $\vec{b}_{2}^{\prime}=\boldsymbol{\beta}^{T}\vec{b}_{2}$, and where $\vec \eta_1 = (\vec{v}^{\prime} \cdot \vec{a}_{1}^{\prime}) \vec E + \boldsymbol{\gamma} \vec{a}_{1}^{\prime}$ and $\vec \eta_2 = (\vec{v}^{\prime} \cdot \vec{a}_{2}) \vec{E} +  \boldsymbol{\gamma} \vec{a}_{2}$. Since vectors $\vec{b}_{1}+\vec{b}_{2}^{\prime}$ and $\vec{b}_{1}-\vec{b}_{2}^{\prime}$ are orthogonal and their lengths can be parameterised by a single angle $|\vec{b}_{1}+\vec{b}_{2}^{\prime}| = 2 \sin \theta$ and $|\vec{b}_{1}-\vec{b}_{2}^{\prime}| = 2 \cos \theta$, the Tsirelson's bound can only be achieved if both $\vec \eta_1$ and $\vec \eta_2$ are unit vectors. Since for all channels satisfying $|\vec E + \gamma \vec r | < 1$ for all unit vectors $\vec r$, also $|\vec \eta_i| < 1$, we are left with studies of channels that output at least one pure state.

All completely positive, trace preserving maps can be reduced to the form (up to unitary conjugation before and after the map, which does not affect the value of the Bell's parameter): 
\begin{align}
\boldsymbol{\gamma} & =\left[\begin{array}{ccc}
\cos\theta & 0 & 0\\
0 & \cos\phi & 0\\
0 & 0 & \cos\theta\cos\phi
\end{array}\right],\,
\vec{E}=\left[\begin{array}{c}
0\\
0\\
\sin\theta\sin\phi
\end{array}\right],
\end{align}
with $\theta\in\left[0,2\pi\right)$, $\phi\in\left[0,\pi\right)$. If $\vec{E}=\vec{0}$, then we must have $\sin\theta=0$ (or $\sin\phi = 0$) and correspondingly $\cos\theta=\pm1$ (or $\cos\phi=\pm1$). Thus, there is only one direction along which $\gamma$ does not shrink its input vector. Therefore, $\gamma \vec a_1' = \pm \gamma \vec a_2$ and the Bell's parameter is restricted by its classical bound as Alice effectively chooses only one setting.

If $|\vec{E}| > 0$, then $|\sin\theta| > 0$ and $|\sin\phi| > 0$. Matrix $\gamma$ applied on an arbitrary unit vector $\vec r$ gives now an ellipsoid of radius \emph{strictly less} than $1$. Furthermore, shifting such obtained vectors by $\vec E$ has to result in a new vector with $|\vec E + \gamma \vec r| \le 1$ in order to guarantee that physical states are mapped to physical states. If instead of shifting by $\vec E$ we now shift by $(\vec v \cdot \vec a) \vec E$, as in our Bell's parameter, the resulting vectors will be shorter than a unit vector except when $|\vec v \cdot \vec a_1'| = |\vec v \cdot \vec a_2| = 1$. In this case, however, the settings of Alice are again the same, along $\vec v$, and therefore the temporal Bell's parameter satisfies the classical bound.

Summing up, the maximal violation can only occur if the channel in-between Alice and Bob is unitary.


\section{Parameter counting}
\label{APP_PARAM_COUNTING}

CPTP maps can be written using Kraus operators: $\mathcal{E}\left(\rho\right)=\sum_{i}A_{i}\rho A_{i}^{\dagger}$. Extremal qubit maps have Kraus rank $2$.
In Ref.~\cite{arXiv:quant-ph/0202124v2} the operators could be reduced down to the following form $A_i = U S_i V^\dag$ with $U,V$ unitary and
\begin{gather}
S_{1}=\begin{pmatrix}s & 0\\
0 & t
\end{pmatrix}, \quad
S_{2}=\begin{pmatrix}0 & \sqrt{1-t^{2}}\\
\sqrt{1-s^{2}} & 0
\end{pmatrix},
\end{gather}
where $0\leq s,t\leq1$. 
Each unitary map introduces four parameters, so each extremal map has ten parameters. Three such maps, for each time step, gives $30$ parameters.


\end{document}